\documentclass{amsart}

\usepackage{amssymb, amsmath, amsthm}
\usepackage{wasysym}
\usepackage{multicol}
\usepackage{algorithm}
\usepackage{algorithmic}
\usepackage{tikz}

\newtheorem{theorem}{Theorem}
\newtheorem{lemma}[theorem]{Lemma}

\theoremstyle{remark}

\newtheorem{example}[theorem]{Example}

\newcommand{\avbl}[2]{\alpha_#2(#1) }

\newcommand{\floor}[1]{\left\lfloor #1 \right\rfloor}
\newcommand{\ceil}[1]{\left\lceil #1 \right\rceil}

\newcommand{\FF}{{\mathcal F}}
\newcommand{\FH}{{F}}
\newcommand{\HH}{{H}}
\newcommand{\cc}{{\mathfrak c}}
\newcommand{\GG}{{\mathcal G}}
\newcommand{\GH}{{G}}
\newcommand{\FW}{{\rm FW}}

\newcommand{\bigo}{{\mathcal O}}

\newcommand{\divides}{\mathbin{|} }

\newcommand{\pref}[1][]{{\rm pref}_{#1}}
\newcommand{\comment}[1]{}

\newcommand{\abs}[1]{\left\vert #1 \right\vert}

\begin{document}

\title{Periods and borders of random words}

\author{\v{S}t\v{e}p\'an Holub}
\address{
Department of Algebra, Faculty of Mathematics and Physics, Charles University, Prague, Czech Republic}
\email{holub@karlin.mff.cuni.cz}

\author{Jeffrey Shallit}
\address{
School of Computer Science,
University of Waterloo\\
Waterloo, ON  N2L 3G1 Canada}
\email{shallit@cs.uwaterloo.ca} 

\begin{abstract}
We investigate the behavior of the periods and border lengths of random
words over a fixed alphabet.  We show that the asymptotic probability
that a random word has a given maximal border length $k$ is a
constant, depending only on $k$ and the alphabet size $\ell$.  We give
a recurrence that allows us to determine these constants with any
required precision.  This also allows us to evaluate the expected
period of a random word.  For the binary case, the expected period is
asymptotically about $n-1.641$. We also give explicit formulas for the
probability that a random word is unbordered or has maximum border
length one.
\end{abstract}

\maketitle

\section{Introduction and notation}
A {\it word} is a finite sequence of letter chosen from a finite
alphabet $\Sigma$.
The periodicity of words is a classical and well-studied topic in both
discrete mathematics and combinatorics on words, starting with the classic
paper of Fine and Wilf \cite{Fine&Wilf:1965} and continuing with the
works of Guibas and Olydzko \cite{Guibas&Odlyzko:1981a,Guibas&Odlyzko:1981b,Guibas:1985}.  For more
recent work, see, for example, \cite{Halava&Harju&Ilie:2000,Tijdeman&Zamboni:2003,Rivals&Rahmann:2003}.

We say that a word $w$ has period $p$ if $w[i] = w[i+p]$ for all $i$
that make the equation meaningful.  (If $|w| = n$ and one indexes
beginning at position $1$, this would be for $1 \leq i \leq n-p$.)  
Trivially every word of length $n$ has all periods of length $\geq n$,
so we restrict our attention to periods $\leq n$.
The least period is sometimes called {\it the} period.  
For example,
the French word {\tt entente} has periods
$3$, $6$, and $7$.

Empirically, one quickly discovers that a randomly chosen word typically
has a least period that is very close to its length.
This readily follows from the fact
that the number of words over a given alphabet grows exponentially as the
length increases.  It can also be
seen as a particular case of the fact that most strings are
not compressible.

In this paper, we quantify this basic observation
and show that the expected least period of a string of length $n$ over an
$\ell$-letter alphabet  is $n-\avbl n \ell$, where $\avbl n \ell$ is
$\bigo(1)$.

Another concept frequently studied in formal language theory is
that of {\it border} of a word 
\cite{Silberger:1971,Silberger:1980}.
A word $x$ has border $w$ if $w$ is
both a prefix and a suffix of $x$.  Normally we do not consider
the trivial borders of length $0$ or $n = |w|$.  Thus, for example, the 
English word {\tt ionization} has one border:   {\tt ion}.  Less trivially,
the word {\tt alfalfa} has two borders:  {\tt a} and {\tt alfa}.  
A word with no borders is {\it unbordered}.

There is an obvious connection between periods of a word and its
borders: if $w$ has a period $p$, then it has a border of length
$|w|-p$. For example, 
the English word  {\tt abracadabra}, of length $11$, has periods
$7$, $10$, and $11$, while it has borders of length $1$ and $4$.

Consequently, the least period of a word corresponds to the length
of the longest border (and an unbordered word corresponds to least period
$n$, the length of the word).
The reader should be constantly aware of this
duality, since it is often useful and more natural to
think about periods in terms of borders. This can be seen from the
announced result: it is more compact to speak directly about the
expected maximum border length, which is $\avbl n \ell$.

If $P$ is a set of integers, we shall write $n-P$ for $\{n-p\mid p\in
P\}$, and $P-n$ for $\{p-n\mid p\in P\}$.

By $\pref[i](v)$, we mean the prefix of length $i$ of the word $v$.

\section{Multiperiodic words and the average border length}

We shall obtain our results by counting words with a given length $n$ and a given finite set of periods $P\subseteq\{1,2,\dots n\}$, or equivalently, with a given set of border lengths $n-P$. For technical reasons, in order to be able to deal with unbordered words, we shall always suppose that $n\in P$, that is, we shall say that every word has a border length zero. 

There are two basic types of requirements. Let 
\[\GG_\ell(P,n)=\{w\in \Sigma_\ell^n \mid \text{for each $p$ in $P$, $p$ is a period of $w$} \}\,, \]
and let $\GH_\ell(P,n)$ be the cardinality of $\GG_\ell(P,n)$.
Similarly, let
\[\FF_\ell(P,n)=\{w\in \GG_\ell(P,n) \mid \text{$\min P$ is the least period of $w$} \}\,, \]
and let $\FH(P,n)$  be the cardinality of $\FF_\ell(P,n)$.

Words with many periods have been amply studied. In particular, there is a fast algorithm constructing a word of length $n$ with periods $P$ and maximal possible number of letters. Such a word, called an $\FW$-word in the literature, is unique up to renaming of the letters. Let $\cc(P,n)$ denote the cardinality of the alphabet of the $\FW$-word of length $n$ and periods $P$.

\begin{example}
Let $P=\{p,q\}$ and $d=\gcd(p,q)$. The well-known periodicity lemma 
(often called the Fine and Wilf theorem, which is the origin of  the term $\FW$-word) states that if a word of length at least $p+q-d$ has periods $p$ and $q$,
then it also has period $d$.
Moreover, the bound $p+q-d$ is sharp; for all $p, q \geq 1$
there are words of length $p+q-d-1$ with period $p$ and $q$ but not period
$d$.  
This can be stated, using the just-introduced terminology,
by the two assertions $\cc(\{p,q\},p+q-d)=d$ and $\cc(\{p,q\},p+q-d-1)>d$.
\end{example}
   
The number $\cc(P,n)$ can be computed and the corresponding $\FW$-word
constructed using the algorithm of Tijdeman and Zamboni \cite{TZ} (see
\cite{multi} for an alternative presentation). The computation is
summarized by the following formula:

\begin{align*}
\cc(P,n)=
\begin{cases}
1, & \text{if $m=1$};\\
n, & \text{if $m\geq n$};\\
\cc(Q,n-m), & \text{if $2m\leq n$};\\
\cc(Q,n-m) +2m -n, & \text{if $m< n < 2m$};
\end{cases}
\end{align*}
where $m=\min P$ and $Q=(P-m)\setminus\{0\}\cup\{m\}$. 

Since each word having the periods in $P$ (and possibly others)
results from a coding (a letter-to-letter mapping)
of the corresponding $\FW$-word, we obtain
\[\GH_\ell(P,n)=\ell^{\cc(P,n)}\,,\]
 which is the starting point of our computation.

Note that $\FF_\ell(\{p\},n)$ is the set of words from $\Sigma_\ell^n$ 
having least period $p$.
Equivalently,  $\FF_\ell(\{n-r\},n)$ is the set of words with the longest border of length $r$. For $0\leq r<n$, let
\[\lambda_\ell(r,n)=\frac {\FH(\{n-r\},n)}{\ell^n}\]
denote the relative number of such words.
Our goal is to compute
\[\avbl n \ell=\sum_{r=0}^{n-1} r\cdot \lambda_\ell(r,n), \]
which is the expected maximum border length for words in $\Sigma_\ell^n$.
We first show that this quantity converges as $n$ approaches infinity.

\begin{lemma}\label{limit}
For each $\ell\geq 2$ and each $0\leq r<n$,
\begin{align*}
	&\abs{\lambda_\ell(r,n+1)-\lambda_\ell(r,n)} \leq \frac 1{\ell^{\floor{n/2}}}\,.
\end{align*}
\end{lemma}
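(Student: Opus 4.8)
The plan is to read $\lambda_\ell(r,n)$ probabilistically and to couple the two lengths by a single‐letter deletion. Write $W_m$ for a uniformly random word in $\Sigma_\ell^m$ and $b(w)$ for the length of the longest border of $w$, so that $\lambda_\ell(r,n)=\Pr[b(W_n)=r]$. Let $\delta$ be the map deleting the letter in position $\floor{n/2}+1$ from a word of length $n+1$. Since $\delta$ erases a fixed position, $\delta(W_{n+1})$ is uniform on $\Sigma_\ell^n$, so
\[
\lambda_\ell(r,n+1)-\lambda_\ell(r,n)=\Pr[b(W_{n+1})=r]-\Pr[b(\delta W_{n+1})=r]=\Pr[A\setminus B]-\Pr[B\setminus A],
\]
where $A=\{b(W_{n+1})=r\}$ and $B=\{b(\delta W_{n+1})=r\}$. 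As $\abs{x-y}\le\max(x,y)$ for $x,y\ge 0$, it suffices to bound $\Pr[A\setminus B]$ and $\Pr[B\setminus A]$ each by $\ell^{-\floor{n/2}}$.

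The key step will be to show that deleting position $\floor{n/2}+1$ from a length-$(n+1)$ word $w$ leaves untouched every border of length $\le\floor{n/2}$: such a border occupies positions $1,\dots,\mu$ and $n+2-\mu,\dots,n+1$ with $\mu\le\floor{n/2}$, and a short check of the floors shows neither block contains position $\floor{n/2}+1$. Hence $w$ and $\delta w$ share exactly the same border lengths in $[0,\floor{n/2}]$, so if $b(w)\neq b(\delta w)$ then one of them exceeds $\floor{n/2}$. A word of length $L$ with longest border $>\floor{n/2}$ has least period $<L-\floor{n/2}$, i.e.\ a period $\le\ceil{n/2}$ when $L=n+1$ and a period $\le\ceil{n/2}-1$ when $L=n$. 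A brief case analysis pins down which: when $r\le\floor{n/2}$, on $A\setminus B$ the discrepancy can only come from the longer side, so $\delta W_{n+1}$ has a period $\le\ceil{n/2}-1$, while on $B\setminus A$ it comes from $W_{n+1}$, which has a period $\le\ceil{n/2}$; when $r>\floor{n/2}$, already $b(W_{n+1})=r$ forces $W_{n+1}$ to have a period $\le\ceil{n/2}$ and $b(\delta W_{n+1})=r$ forces $\delta W_{n+1}$ to have a period $\le\ceil{n/2}-1$. In every case $A\setminus B$ and $B\setminus A$ lie inside the two events ``$W_{n+1}$ has a period $\le\ceil{n/2}$'' and ``$\delta W_{n+1}$ has a period $\le\ceil{n/2}-1$'', one each.

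To conclude I would count: a word of length $m$ with a period $q\le p$ is determined by its first $q$ letters, so there are at most $\sum_{q=1}^{p}\ell^{q}<\ell^{p+1}/(\ell-1)$ length-$m$ words with some period $\le p$. With $m=n+1$, $p=\ceil{n/2}$ this gives $\Pr[W_{n+1}\text{ has a period}\le\ceil{n/2}]<\ell^{\ceil{n/2}+1}/\bigl((\ell-1)\ell^{n+1}\bigr)=\ell^{-\floor{n/2}}/(\ell-1)$; with $m=n$, $p=\ceil{n/2}-1$, and using that $\delta W_{n+1}$ is uniform on $\Sigma_\ell^n$, it gives $\Pr[\delta W_{n+1}\text{ has a period}\le\ceil{n/2}-1]<\ell^{\ceil{n/2}}/\bigl((\ell-1)\ell^{n}\bigr)=\ell^{-\floor{n/2}}/(\ell-1)$. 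Since $\ell\ge 2$ makes $1/(\ell-1)\le 1$, both bounds are $\le\ell^{-\floor{n/2}}$, and taking the maximum finishes the proof.

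The part needing the most care is exactly this constant. It is tight enough that one must keep $\Pr[A\setminus B]$ and $\Pr[B\setminus A]$ separate — bounding $\abs{\Pr[A]-\Pr[B]}$ by $\Pr[A\setminus B]+\Pr[B\setminus A]$ instead would cost a factor $2$ and fail at $\ell=2$ — and one must carry out the floor/ceiling bookkeeping, the border-preservation check, and the deduction above through the parity of $n$ and the boundary value $r=\floor{n/2}$ without slack.
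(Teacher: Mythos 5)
Your argument is correct and is essentially the paper's own proof in probabilistic dress: your deletion map $\delta$ is the inverse of the paper's insertion of a letter at position $\floor{n/2}+1$, your events $A\setminus B$ and $B\setminus A$ correspond exactly to its sets $\mathcal{F}\setminus\mathcal{B}$ and $\mathcal{B}\setminus\mathcal{F}$, and the concluding geometric count of words with a period at most $\ceil{n/2}$ (resp.\ $\ceil{n/2}-1$) is the same. The only, harmless, difference is that you run the coupling uniformly over all $r$, whereas the paper dispatches the range $r\geq\floor{n/2}$ separately with the one-line bound $F_\ell(\{n-r\},n)\leq\ell^{n-r}$.
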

\begin{proof}

Case 1:  $r \geq \lfloor n/2 \rfloor$.  Then
\begin{align*}
	\abs{\lambda_\ell(r,n+1)-\lambda_\ell(r,n)}&=\abs{\frac{\FH_\ell(\{n+1-r\},n+1)}{\ell^{n+1}}-\frac{\FH_\ell(\{n-r\},n)}{\ell^{n}}}\\
	&=\frac 1 {\ell^{n+1}}\abs{\FH_\ell(\{n+1-r\},n+1)-\ell\cdot \FH_\ell(\{n-r\},n)}.
\end{align*}
Recall that $\FH_\ell(\{n+1-r\},n+1)$ (resp., $\FH_\ell(\{n-r\},n)$)
counts the words with longest border length $r$ from $\Sigma_\ell^{n+1}$ (resp.,
$\Sigma_\ell^n$).
First, note that $F_\ell(\{p\},n)\leq \ell^p$ for any $p$ and $n$. This implies \[\abs{\lambda_\ell(r,n+1)-\lambda_\ell(r,n)}\leq \frac 1{\ell^{r}}\,\]
and we are done.

\medskip

Case 2:  $r<\floor{n/2}$.
There is a useful correspondence
between $\Sigma_\ell^n$ and $\Sigma_\ell^{n+1}$,
given by the insertion of a letter in the middle of the shorter word.
The basic observation, already used in \cite{harborth,Nielsen:1973},
is that this insertion does not influence borders
of length at most $\floor{n/2}$.
Define 
\begin{align*}
\FF &=\FF_\ell(\{n+1-r\},n+1),\\
\+B &=\{w_1aw_2 \mid a\in \Sigma_\ell,\, |w_1|=\floor{n/2},\, |w_2|=\ceil{n/2},\, w_1w_2\in \FF_\ell(\{n-r\},n)
\}\,.
\end{align*}
Then $\abs{\+B}=\ell\cdot \FH_\ell(\{n-r\},n)$. Let $w\in \+F\setminus \+B$ 
and write $w=w_1aw_2$ with $a\in \Sigma_\ell$, $|w_1|=\floor{n/2}$, and
$|w_2|=\ceil{n/2}$. The words $w$ and $w_1w_2$ have the same borders up
to length $\floor{n/2}$. Since $w_1w_2\notin \FF_\ell(\{n-r\},n)$, we
deduce that $w_1w_2$ has a border of length at least $\floor{n/2}+1$,
that is, a period at most $\ceil{n/2}-1$. This implies
\begin{align}
\abs{\+F\setminus \+B}\leq \ell\cdot \sum_{j=0}^{\ceil{n/2}-1}\ell^j< \ell^{\ceil{n/2}+1}.
\end{align} 
Similarly, a word  $w\in \+B\setminus\+F$ has period at most $\ceil{n/2}$, and
so
\begin{align}
\abs{\+B\setminus \+F}\leq \sum_{j=0}^{\ceil{n/2}}\ell^j< \ell^{\ceil{n/2}+1}.
\end{align} 
 We thus obtain
\begin{align*}
	&\abs{\lambda_\ell(r,n+1)-\lambda_\ell(r,n)}=\frac 1{\ell^{n+1}} \Big|\abs{\+B\setminus \+F}-\abs{\+F\setminus \+B}\Big|
	<\frac 1 {\ell^{\floor{n/2}}}.
\end{align*}
\end{proof}

\begin{theorem}
For each $\ell\geq 2$ and $r\geq 0$, the limits \[\alpha_\ell:=\lim_{n\to \infty}\alpha_\ell(n)\, \quad \text{and} \quad \lambda_\ell(r)=\lim_{n\to \infty}\lambda_\ell(r,n)\]
 exist.  Furthermore, the convergence is exponential.
\end{theorem}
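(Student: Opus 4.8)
The plan is to leverage Lemma~\ref{limit} twice: once to establish the existence of $\lambda_\ell(r)$ for each fixed $r$, and then to bootstrap this to the existence of $\alpha_\ell$. For the first part, fix $r$. Lemma~\ref{limit} tells us that for $n > 2r$ (so that $r < \lfloor n/2\rfloor$), the terms of the sequence $(\lambda_\ell(r,n))_{n}$ satisfy $\abs{\lambda_\ell(r,n+1)-\lambda_\ell(r,n)} \leq \ell^{-\lfloor n/2\rfloor}$. Since $\sum_{n} \ell^{-\lfloor n/2\rfloor}$ converges (it is dominated by a geometric series with ratio $\ell^{-1/2} < 1$, up to a constant factor), the sequence is Cauchy and hence converges to some limit $\lambda_\ell(r)$. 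Moreover, summing the tail bound gives $\abs{\lambda_\ell(r,n) - \lambda_\ell(r)} \leq \sum_{m\geq n} \ell^{-\lfloor m/2\rfloor} = \bigo(\ell^{-n/2})$, so the convergence is exponential (with the implied constant uniform in $r$, a point worth noting for the next step).

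For the existence of $\alpha_\ell = \lim_n \alpha_\ell(n)$, recall $\alpha_\ell(n) = \sum_{r=0}^{n-1} r\,\lambda_\ell(r,n)$. The natural candidate limit is $\alpha_\ell = \sum_{r\geq 0} r\,\lambda_\ell(r)$, provided this series converges. To control the tail, I would use the bound $\FH_\ell(\{n-r\},n) = \FH_\ell(\{p\},n) \leq \ell^{p} = \ell^{n-r}$ noted in the proof of Lemma~\ref{limit} (Case 1), which gives $\lambda_\ell(r,n) \leq \ell^{-r}$. This bound is uniform in $n$, so it passes to the limit: $\lambda_\ell(r) \leq \ell^{-r}$. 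Hence $\sum_{r\geq 0} r\,\lambda_\ell(r) \leq \sum_{r\geq 0} r\ell^{-r} < \infty$, so $\alpha_\ell$ is well-defined. The same bound shows the "high-$r$" part of $\alpha_\ell(n)$, namely $\sum_{r = R}^{n-1} r\,\lambda_\ell(r,n)$, is at most $\sum_{r\geq R} r\ell^{-r}$, which is $\bigo(R\ell^{-R})$ uniformly in $n$.

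To finish, split $\abs{\alpha_\ell(n) - \alpha_\ell}$ at a threshold $R = R(n)$ into a low part $\sum_{r<R} r\abs{\lambda_\ell(r,n)-\lambda_\ell(r)}$ and two high tails $\sum_{R\leq r<n} r\,\lambda_\ell(r,n)$ and $\sum_{r\geq R} r\,\lambda_\ell(r)$. The low part is at most $R^2 \cdot \bigo(\ell^{-n/2})$ by the uniform exponential convergence from the first paragraph, and each high tail is $\bigo(R\ell^{-R})$. Choosing $R = \lfloor n/4 \rfloor$ (say) makes the first term $\bigo(n^2 \ell^{-n/2})$ and the tails $\bigo(n\,\ell^{-n/4})$, both of which decay exponentially. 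Thus $\alpha_\ell(n) \to \alpha_\ell$ exponentially fast. The main obstacle is purely bookkeeping: making sure the constants in the "uniform in $r$ / uniform in $n$" claims really are uniform, so that the threshold argument goes through; the two ingredients needed for that — the summable-difference bound and the geometric pointwise bound $\lambda_\ell(r,n)\leq \ell^{-r}$ — are both already established in the proof of Lemma~\ref{limit}, so no genuinely new estimate is required.
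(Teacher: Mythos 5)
Your argument is correct and is essentially the fully detailed version of what the paper intends: the paper's proof is a one-line appeal to Lemma~\ref{limit} and the definition of $\alpha_\ell(n)$, and your Cauchy-sequence argument for $\lambda_\ell(r)$ plus the tail split using $\lambda_\ell(r,n)\leq \ell^{-r}$ is exactly the bookkeeping that remark leaves to the reader. No gaps; the only cosmetic point is that Lemma~\ref{limit} already gives the bound $\ell^{-\lfloor n/2\rfloor}$ for all $0\leq r<n$ (both cases), so your restriction to $n>2r$ is unnecessary, though harmless.
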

\begin{proof}
Follows directly from the definition of $\alpha_\ell(n)$ and Lemma \ref{limit}.
\end{proof}

\section{Recurrences}
From the estimates of the previous section, we know that
$\alpha_\ell(n)$ and $\lambda_\ell (r,n)$ both converge quickly to
$\alpha_\ell$ and $\lambda_\ell (r)$, respectively.  Thus, they can
be estimated to a few digits by explicit enumeration.

In order to evaluate $\avbl n \ell$ to dozens of decimal places, however,
we need a more efficient way
to calculate $\FH_\ell(\{p\},n)$. This can be done using the recurrence
formulas that we derive below. They are reformulations and generalizations of formulas given by Harborth
\cite{harborth} for sets of periods.

We first prove the following auxiliary claim.
\begin{lemma}\label{pq}
Let a word $w$ have a period $p<|w|$ and let $u$ be the prefix of $w$ of length $|w|-p$. Then $w$ has a period $q>p$ if and only if $u$ has a period $q-p$.
\end{lemma}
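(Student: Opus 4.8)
The plan is to argue by direct manipulation of the indices defining a period, using the structural fact that $w$ with period $p$ is determined by its prefix of length $p$, and in particular its length-$(|w|-p)$ prefix $u$ sits inside that periodic pattern in two overlapping copies. Write $n=|w|$ and $u=\pref[n-p](w)$, so that by the definition of period $w[i]=w[i+p]$ for $1\le i\le n-p$; in particular, reading $w$ through its period, $w[i]=u[i]$ for $1\le i\le n-p$ and also $w[i+p]=u[i]$ for the same range, i.e.\ the last $n-p$ letters of $w$ are a copy of $u$ as well.

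For the forward direction, suppose $w$ has a period $q>p$. I want to show $u[i]=u[i+(q-p)]$ whenever $1\le i\le (n-p)-(q-p)=n-q$. Fix such an $i$. Then $u[i]=w[i]$ and, since $i+(q-p)\le n-p$, also $u[i+(q-p)]=w[i+(q-p)]$. Now $w[i+(q-p)]=w[i+(q-p)+p]=w[i+q]$ because $p$ is a period and $1\le i+(q-p)\le n-p$; and $w[i]=w[i+q]$ because $q$ is a period and $1\le i\le n-q$. Chaining these equalities gives $u[i]=u[i+(q-p)]$, so $u$ has period $q-p$.

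For the converse, suppose $u$ has period $q-p$; I must show $w[j]=w[j+q]$ for all $1\le j\le n-q$. Here the subtlety, and the main obstacle, is that $q$ may exceed $n-p$, so I cannot always reduce directly into $u$; I will need to combine the assumed period of $u$ with the period $p$ of $w$, possibly iterating. The clean way is: since $p$ is a period of $w$, the map $i\mapsto w[i]$ on $\{1,\dots,n\}$ factors through residues modulo $p$ restricted to the prefix of length $p$; more usefully, $w[j]=u[j]$ and $w[j+q]=u[j+q-p]$ provided both arguments lie in $[1,n-p]$, which holds since $j\le n-q\le n-p$ and $j+q-p\le n-p$; then $u$-periodicity with period $q-p$ gives $u[j]=u[j+q-p]$ once $j+q-p\le |u|=n-p$, which again holds. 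So $w[j]=u[j]=u[j+q-p]=w[j+q]$, as needed. The one case requiring care is when $q-p\ge |u|$, where "$u$ has period $q-p$" is vacuous; but then $q-p\ge n-p$ forces $q\ge n$, and a period $\ge n$ is trivial for $w$, so there is nothing to prove — I should state this boundary case explicitly.

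Thus the proof is essentially a symmetric index chase in both directions, with the only genuine care needed at the boundary $q$ close to $n$; I expect that boundary bookkeeping (ensuring all indices invoked lie in the legal ranges $[1,n-p]$ for $w$ and $[1,|u|]$ for $u$) to be the part most prone to slips, and I would write the inequalities out once carefully rather than hand-wave them.
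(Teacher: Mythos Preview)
Your argument is correct: the index chase goes through in both directions, and your handling of the degenerate case $q\ge n$ (equivalently $q-p\ge|u|$) is appropriate. One small remark: the ``subtlety'' you flag in the converse---that $q$ might exceed $n-p$---is not actually an obstacle, since your bounds $j\le n-q$ and $j+q-p\le n-p$ hold regardless of whether $q\le n-p$ or not; the computation is uniform.

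That said, your route differs from the paper's. The paper exploits the period--border duality directly: since $u$ is a border of $w$, the borders of $w$ of length at most $|u|$ are exactly the borders of $u$; translating ``border of length $|w|-q$'' into ``period $q$'' on the $w$ side and ``period $q-p$'' on the $u$ side gives the result in four lines, with no index bookkeeping at all. Your approach is more elementary---it does not invoke the border correspondence and works purely from the defining equation $w[i]=w[i+p]$---but at the cost of having to verify several index-range inequalities. The paper's proof is shorter and conceptually cleaner given that the border--period dictionary has already been set up in the introduction; yours would be the natural choice if one wanted a self-contained argument that does not rely on that dictionary.
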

\begin{proof}
Note that $u$ is a border of $w$. The following conditions are easily seen to be equivalent:
\begin{itemize}
	\item $w$ has a period $q$,
	\item $w$ has a border of length $|w|-q$,
	\item $u$ has a border of length $|w|-q$,
	\item $u$ has a period $|u|-(|w|-q)$.
\end{itemize}
Since $|u|-(|w|-q)=(|w|-p)-(|w|-q)=q-p$, the proof is completed.
\end{proof}

\begin{theorem}\label{rec}
Let $P$ be a set of periods with $m=\min P$ and $\max P<n$.
Then
\begin{align}
\FH_\ell(P,n)=\GH_\ell(P,n)&-
\sum_{p=\ceil{m/2}}^{m-1} \HH_\ell(P,p,n)
\,,
\end{align}
where
\begin{align}\label{H}
\HH_\ell(P,p,n) :=
\begin{cases}
\FH_\ell\big((P-p)\cup\{p\},n-p\big),& \text{if $p<\ceil{n/2}$};\\[0.5em]
\ell^{2p-n}\cdot \FH_\ell(P-p,n-p), & \text{if $p\geq \ceil{n/2}$} \,.
\end{cases}
\end{align}
\end{theorem}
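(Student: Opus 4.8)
The plan is to count words in $\GG_\ell(P,n)$ by sorting them according to whether $m$ is the least period, and if not, according to the value of the least period. Since $\FF_\ell(P,n)$ consists of exactly those words in $\GG_\ell(P,n)$ whose least period is $m$, we have the partition
\[
\GH_\ell(P,n)=\FH_\ell(P,n)+\#\{w\in\GG_\ell(P,n)\mid \text{least period of $w$ is $<m$}\}\,,
\]
so the task reduces to showing that the second term equals $\sum_{p=\ceil{m/2}}^{m-1}\HH_\ell(P,p,n)$. To that end I would define, for each $p$ with $1\le p<m$, the set $\+A_p$ of words $w\in\GG_\ell(P,n)$ whose least period is exactly $p$, and show first that $\+A_p$ is empty unless $p\ge\ceil{m/2}$. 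This is the first key step, and it follows from the Fine--Wilf theorem: if $w$ has periods $p$ and $m$ with $p<m$, and $p<\ceil{m/2}$, i.e. $2p\le m$, hence $p+m-\gcd(p,m)\le p+m-p = m \le n$ (using $\max P<n$ so $n>m$), then $w$ has period $\gcd(p,m)\mid p$, so $p$ could not be the least period unless $p=\gcd(p,m)$; iterating, the least period would have to divide $p$ and still be $\ge\ceil{m/2}$ — actually the cleanest formulation is: the least period $p^\*$ of $w$ satisfies $2p^\*>m$ whenever $p^\*<m\le n$, because $w$ has periods $p^\*$ and $m$ with $p^\*+m-\gcd(p^\*,m)>n\ge m$ forces $m-\gcd(p^\*,m)>m-p^\*$, i.e. $\gcd(p^\*,m)<p^\*$, contradicting minimality. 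This gives the index range $\ceil{m/2}\le p\le m-1$ in the sum.

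The second, main step is to establish the bijection $\+A_p \leftrightarrow$ the set counted by $\HH_\ell(P,p,n)$, via Lemma~\ref{pq}. Take $w\in\+A_p$, so $p$ is the least period of $w$, $p<m\le\max P<n$, and let $u=\pref[n-p](w)$ be the corresponding border. For each $q\in P$ we have $q\ge m>p$, so by Lemma~\ref{pq}, $w$ has period $q$ iff $u$ has period $q-p$; moreover $w$ has period $p$ as a genuine period with $p\mid$ nothing forced, and $p$ being the \emph{least} period of $w$ translates (again by Lemma~\ref{pq}, comparing with any candidate period $q'$ with $p<q'<$ the next relevant value) into: $u$ has all periods in $P-p$, and additionally, when $p<\ceil{n/2}$ so that $2p<n$ and $u$ is long enough ($|u|=n-p>p$) for $p$ itself to be a meaningful period candidate on the map-down side, $u$ has least period $p$; this is exactly the definition of $u\in\FF_\ell\big((P-p)\cup\{p\},n-p\big)$. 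The map $w\mapsto u$ is injective because $w$ is recovered from its border $u$ and its period $p$ (namely $w$ is the length-$n$ prefix of $u^\omega$), and it is surjective onto the relevant set because any such $u$ extends back to a valid $w$. When $p\ge\ceil{n/2}$, i.e. $2p\ge n$, the prefix $u$ has length $n-p\le p$, so $p$ is never a meaningful period of $u$ and imposes no constraint there; instead the constraint ``$p$ is the least period of $w$, not smaller'' is automatic (there is no room for a shorter period to also force things), and the extension of a given $u$ of length $n-p$ with periods $P-p$ back to a length-$n$ word $w$ with period $p$ is no longer unique: $w=u'u$ where $u'$ is \emph{any} word of length $n - (n-p) = p$ having $u$ as... more precisely, $w$ of length $n$ with period $p$ and border $u$ of length $n-p$ is freely chosen on the $2p-n$ positions not determined by $u$ together with the periodicity, giving the factor $\ell^{2p-n}$ and the second case of \eqref{H}.

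The last step is bookkeeping: the sets $\+A_p$ for distinct $p$ are disjoint (a word has a unique least period), their union over $\ceil{m/2}\le p\le m-1$ is exactly $\{w\in\GG_\ell(P,n)\mid\text{least period}<m\}$ by Step~1, and $\abs{\+A_p}=\HH_\ell(P,p,n)$ by Step~2, so summing and subtracting from $\GH_\ell(P,n)=\ell^{\cc(P,n)}$ yields the claimed formula. The main obstacle I anticipate is Step~2 in the regime $p\ge\ceil{n/2}$: one must verify carefully that the ``least period'' condition on $w$ imposes \emph{no} extra constraint beyond $w$ having periods $P$ and period $p$ — equivalently that no shorter period can sneak in — and that the counting of extensions of $u$ to $w$ really produces the factor $\ell^{2p-n}$ with all such extensions automatically lying in $\GG_\ell(P,n)$ with least period exactly $p$; this needs the Fine--Wilf bound from Step~1 applied again, now to rule out periods $<p$ in $w$. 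The boundary case $p=\ceil{n/2}$ (whether $2p-n$ is $0$ or $1$, and whether $|u|=n-p$ equals $p$ or $p-1$) should be checked to confirm the two cases of \eqref{H} agree with each other there.
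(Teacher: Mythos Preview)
Your overall plan---partition $\GG_\ell(P,n)\setminus\FF_\ell(P,n)$ according to a period parameter $p$, then biject each piece via Lemma~\ref{pq}---is exactly the paper's, but you have chosen the wrong parameter. You let $p$ be the \emph{least} period of $w$; the paper lets $p$ be the \emph{largest} period of $w$ that is still strictly less than $m$, defining
\[
\+H_\ell(P,p,n)=\{w\in\Sigma_\ell^n : w\text{ has periods }P\cup\{p\}\text{ and no period }p'\text{ with }p<p'<m\}\,.
\]
This single change breaks your argument in both steps.

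Your Step~1 is false as stated. Take $P=\{6\}$, $n=7$: the word $abababa$ lies in $\GG_\ell(P,n)$ and has least period $2<3=\lceil m/2\rceil$, so your $\+A_2$ is nonempty. Your Fine--Wilf argument silently assumes $p^\ast+m-\gcd(p^\ast,m)>n$, which is not given; all Fine--Wilf yields here is that the least period divides $m$. The paper's emptiness argument, by contrast, is a one-liner once $p$ denotes the largest period below $m$: if $p<\lceil m/2\rceil$ then $2p$ is also a period with $p<2p<m$, contradicting maximality of $p$.

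Your Step~2 bijection targets the wrong condition. For $u\in\FF_\ell\big((P-p)\cup\{p\},n-p\big)$ one needs the least period of $u$ to equal $\min\big((P-p)\cup\{p\}\big)=m-p$, not $p$ (note $m-p\le p$ in the range $p\ge\lceil m/2\rceil$). Lemma~\ref{pq} converts ``$w$ has no period strictly between $p$ and $m$'' into ``$u$ has no period strictly between $0$ and $m-p$'', i.e.\ precisely ``least period of $u$ is $m-p$''. It says nothing about periods of $w$ that are \emph{smaller} than $p$, so it cannot encode ``$p$ is the least period of $w$''. The same remark applies when $p\ge\lceil n/2\rceil$: membership of the border $v$ in $\FF_\ell(P-p,n-p)$ again corresponds to ``no period of $w$ in $(p,m)$'', not to minimality of $p$. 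Once you switch the indexing parameter to the largest period below $m$, everything in your outline goes through and matches the paper's proof.
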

\begin{proof}
From $\GH_\ell(P,n)$ we have to subtract the number of words from $\Sigma_\ell^n$ that have periods $P$ but also a period smaller than $m$. We 
define, for each $1\leq p<m$, the set
\[
\+H_\ell(P,p,n)=\{w\in \Sigma_\ell^n \mid \text{$w$ has periods $P\cup \{p\}$, and no period $p'$ with $p<p'<m$}\}\,.
\]
If $p<\ceil{m/2}$ then $\+H_\ell(P,p,n)$ is empty, since a word $w\in \+H_\ell(P,p,n)$ also has a period $2p$, and $p<2p<m$ contradicts the definition of $\+H(p)$.
Moreover, the sets $\+H(P,p,n)$ are pairwise disjoint, and \[\GG_\ell(P,n)\setminus \FF_\ell(P,n)=\bigcup_{p=\ceil{m/2}}^{m-1}\+H_\ell(P,p,n)\,.\]
It remains to show that $H(p)$ is the cardinality of $\+H_\ell(P,p,n)$ for each $\ceil{m/2}\leq p<m-1$. 

Let $p<\ceil{n/2}$. We claim that $w\mapsto \pref[n-p] w$ is a one-to-one mapping of $\+H_\ell(P,p,n)$ to $\FF_\ell\big((P-p)\cup\{p\},n-p\big)$. Let $w\in \+H_\ell(P,p,n)$. By Lemma \ref{pq}, the word $\pref[n-p] w$ has periods $P-p$ and no period $p'-p$ with $p<p'<m$, that is, no period less than $m-p$. Since $m-p=\min\big((P-p)\cup\{p\}\big)$ and since $\pref[n-p] w$ also has a period $p$, we have $\pref[n-p] w\in \FF_\ell\big((P-p)\cup\{p\},n-p\big)$. Similarly, one can verify that if $v\in \FF_\ell\big((P-p)\cup\{p\},n-p\big)$, then $w_v:=(\pref[p] v)^{n/p}\in \+H_\ell(P,p,n)$ and $\pref[n-p]w_v=v$.

Let $p\geq \ceil{n/2}$. Again, using Lemma \ref{pq}, it is straightforward to verify that
\[\+H_\ell(P,p,n)=\{vuv \mid v\in \FF_\ell(P-p,n-p),u\in \Sigma_\ell^{2p-n}\}\,.\]

\end{proof}

If $\min P$ is small, then we can formulate a more explicit formula that uses the M\"obius $\mu$-function.

\begin{lemma}
Let $P$ be a set of periods with $m=\min P\leq \floor{n/2}+1$. Then
\begin{align}
\FH_\ell(P,n)=\sum_{d\divides m}\mu\left(\frac md\right) \GH_\ell\left(P\cup\left\{d\right\},n\right).
\end{align}
\end{lemma}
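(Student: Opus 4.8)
The plan is to count $\FH_\ell(P,n)$ by an inclusion–exclusion over the divisors of $m=\min P$, using the structural fact that a word with periods $P$ fails to have $m$ as its least period precisely when it has some strictly smaller period. First I would observe that under the hypothesis $m\leq \floor{n/2}+1$, any word of length $n$ having two periods $p,p'<m$ automatically has period $\gcd(p,p')$: indeed $p+p'-\gcd(p,p')\leq 2(m-1)-1\leq n-1<n$... more carefully, $p+p' \le 2(m-1) \le 2\floor{n/2} \le n$, so the Fine--Wilf bound $p+p'-\gcd(p,p') < n$ is met and the periodicity lemma (as recalled in the Example) applies. Hence the set of ``bad'' periods $p'<m$ of such a word, together with $m$ itself when present, is closed under gcd below $m$; in particular a word with periods $P$ has least period dividing $m$, and it has period $d$ for every divisor $d$ of its least period.

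Next I would rewrite the right-hand side. For a divisor $e\divides m$, let $A_e$ be the set of words in $\Sigma_\ell^n$ that have all periods in $P\cup\{e\}$; then $\abs{A_e}=\GH_\ell(P\cup\{e\},n)$, and by the previous paragraph a word lies in $A_e$ iff it has periods $P$ and its least period divides $\gcd(m,e)=e$. Thus, grouping words in $\GG_\ell(P,n)$ by their least period $t$ (which must divide $m$), we get $\GH_\ell(P\cup\{e\},n)=\sum_{t\divides e}\FH_\ell(P\cup\{t\}',n)$ where $t$ ranges over divisors of $e$ and the inner count is the number of words with periods $P$ and least period exactly $t$. (When $t=m$ this inner count is exactly $\FH_\ell(P,n)$.) This is a standard divisor-poset summation identity, so Möbius inversion over the divisor lattice of $m$ gives
\[
\sum_{t\divides m}\FH_\ell(P\cup\{t\}',n)\big|_{t=m\text{ term}}=\sum_{d\divides m}\mu\!\left(\frac md\right)\GH_\ell\!\left(P\cup\{d\},n\right),
\]
and the left side, after the inversion is carried out correctly, is precisely $\FH_\ell(P,n)$ because all terms with $t\mid m$, $t\neq m$ cancel. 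I should be careful to phrase the inversion so that it isolates the single ``top'' term: writing $g(e)=\GH_\ell(P\cup\{e\},n)$ and $f(t)$ for the exact-least-period count, we have $g(e)=\sum_{t\divides e}f(t)$ for all $e\divides m$, so $f(m)=\sum_{d\divides m}\mu(m/d)g(d)$, which is the claim since $f(m)=\FH_\ell(P,n)$.

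The main obstacle is justifying cleanly that for $e\divides m$ the words counted by $\GH_\ell(P\cup\{e\},n)$ are exactly those with periods $P$ whose least period divides $e$ — i.e., that having periods $P\cup\{e\}$ is equivalent to having periods $P$ together with \emph{some} divisor of $e$ as a period, and more to the point that the least period of any word in $\GG_\ell(P,n)$ is a divisor of $m$. Both rely on the Fine--Wilf bound being satisfied, which is exactly where the hypothesis $m\leq\floor{n/2}+1$ is used: it guarantees $p+q-\gcd(p,q)\le n$ whenever $p,q<m$ (and also when one of them equals $m$ and the other is $<m$, since then the smaller one is $\le m-1$ and $m+(m-1)-\gcd\le 2m-2\le n$ provided $m\le\floor{n/2}+1$), so the periodicity lemma forces the gcd to be a period as well. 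Once this closure-under-gcd is in hand, the poset structure (divisors of $m$ ordered by divisibility, with least period necessarily in this poset) and the resulting Möbius inversion are routine; I would also note the degenerate consistency check $m=1$, where the sum has the single term $d=1$ and gives $\FH_\ell(P,n)=\GH_\ell(P,n)$, matching the fact that every word trivially has least period $1$ among the constraints... i.e.\ $\FF_\ell(P,n)=\GG_\ell(P,n)$ when $\min P=1$.
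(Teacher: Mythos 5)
Your proposal is correct and follows essentially the same route as the paper: both use the hypothesis $m\leq\floor{n/2}+1$ to invoke the periodicity lemma and conclude that the least period of any word with period $m$ (or with period $e\divides m$) must divide $m$, then establish the divisor-sum identity $\GH_\ell(P\cup\{e\},n)=\sum_{t\divides e}\FH_\ell(P\cup\{t\},n)$ and apply M\"obius inversion to isolate the $t=m$ term. The paper's version is just a more compact rendering of the same argument.
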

\begin{proof}
Let $w$ be a word of length $n$ with a period $m$ and let $p$ be the least period of $w$. Then, by the periodicity lemma,
we have that $p$ divides $m$, since $p<m$ implies $p+m-1\leq n$. Therefore, for each divisor $p$ of $m$,
\[\GH_\ell\left(P\cup\left\{p\right\},n\right)=\sum_{d\divides p}\FH_\ell(P\cup\left\{d \right\},n),\]
and the claim follows from M\"obius inversion.
\end{proof}

\section{Explicit formulas}
In this section 
we derive explicit formulas for
$\lambda_\ell(0)$ and $\lambda_\ell(1)$, which are the
asymptotic probabilities that a random word is unbordered,
or has longest border of length one, respectively.
These are two cases in which Lemma \ref{rec} yields a relatively
simple expression, since $\ceil{m/2}\geq \floor{n/2}$. 
\medskip
\subsection{Unbordered words}
The number of unbordered words satsifies
a well known recurrence formula (see, e.g., \cite[p.~143, Eq.~(34)]{harborth} for the binary case and \cite{Nielsen:1973} for the general case). The formula can be verified using Theorem \ref{rec} but we shall give an elementary proof. 
In this section,
let $u_n$ denote $\FH_\ell(\{n\},n)$, and let $t(n)$ denote
$\lambda_\ell(0,n)$.
\begin{theorem}\label{unb}
\[
u_n=
\begin{cases}
\ell, & \text{if $n=1$};\\
\ell(\ell-1) & \text{if $n=2$};\\
\ell\cdot u_{n-1}, & \text{if $n\geq 3$ is odd};\\
\ell\cdot u_{n-1}-u_{n/2}, & \text{if $n\geq 4$ is even}.
\end{cases}
\]
\end{theorem}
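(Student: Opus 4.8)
The plan is to give an elementary bijective/counting argument for $u_n = \FH_\ell(\{n\},n)$, the number of unbordered words of length $n$, treating the base cases $n=1,2$ directly and then splitting on the parity of $n$. The base cases are immediate: every single letter is unbordered, giving $\ell$; a length-two word $ab$ is unbordered iff $a \neq b$, giving $\ell(\ell-1)$. For $n \geq 3$, the idea is to relate unbordered words of length $n$ to unbordered words of length $n-1$ by deleting (or inserting) a letter in the middle, exactly as in the proof of Lemma~\ref{limit}: inserting a letter at position $\lceil n/2 \rceil$ does not affect borders of length $\leq \lfloor n/2 \rfloor$, and an unbordered word has no border at all, in particular none of length $> \lfloor n/2 \rfloor$.

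More precisely, first I would show the easy inclusion. If $w \in \Sigma_\ell^{n-1}$ is unbordered and we insert any letter $a$ in the middle to form $w' \in \Sigma_\ell^n$, then $w'$ is unbordered: any border of $w'$ of length $\leq \lfloor n/2 \rfloor$ would be a border of $w$ of the same length (by the middle-insertion observation), contradicting unborderedness of $w$; and $w'$ cannot have a border of length $r$ with $\lfloor n/2 \rfloor < r < n$, because such a border forces a period $p = n - r \leq \lceil n/2 \rceil - 1 < \lceil n/2 \rceil$, hence (the shorter half being covered) $w'$ would again have a short border, contradiction. Wait — more carefully: a border of length $r > \lfloor n/2 \rfloor$ means the word has period $p = n-r < \lceil n/2 \rceil \le \lfloor n/2\rfloor + 1$; by the periodicity/overlap structure this in turn yields a nonempty border of length $\le \lfloor n/2\rfloor$, namely $w'$ restricted appropriately, which we've already excluded. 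So the map $(w,a) \mapsto w'$ sends (unbordered word of length $n-1$) $\times \Sigma_\ell$ injectively into the unbordered words of length $n$, giving $u_n \geq \ell u_{n-1}$.

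For the reverse direction I would take an unbordered $w' \in \Sigma_\ell^n$, delete the middle letter to get $w \in \Sigma_\ell^{n-1}$, and ask when $w$ fails to be unbordered. If $w$ has a border, it has a shortest border, of some length $s$; since deletion did not affect borders of length $\le \lfloor n/2\rfloor$ of $w'$ and $w'$ has none, we must have $s \ge \lfloor n/2 \rfloor + 1$, equivalently $w$ has period $q = (n-1) - s \le \lceil n/2\rceil - 2$, i.e. $q \le \lfloor (n-1)/2 \rfloor$. Now I split on parity of $n$: when $n$ is odd, $|w| = n-1$ is even and one checks (using Lemma~\ref{pq} or a direct argument on the period structure) that no such "bad" $w'$ exists — every unbordered $w'$ of odd length deletes to an unbordered word — so $u_n = \ell u_{n-1}$. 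When $n$ is even, $|w| = n-1$ is odd; the bad words $w'$ are exactly those whose middle-deletion $w$ has a small period, and a short computation with Lemma~\ref{pq} shows these are in bijection with the unbordered words of length $n/2$ (the word $w'$ is determined by a half-block of length $n/2$ repeated with the extra middle letter), contributing the correction $-u_{n/2}$, so $u_n = \ell u_{n-1} - u_{n/2}$.

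The main obstacle is the bookkeeping in the reverse direction: precisely identifying which unbordered words $w'$ of length $n$ become bordered after middle-deletion, and showing these form a set of size $0$ (odd $n$) or $u_{n/2}$ (even $n$). This requires being careful that a "bad" $w'$ cannot itself acquire a long border from the deletion in a way that escapes the $\lfloor n/2\rfloor$ bound, and that the correspondence with length-$n/2$ unbordered words is genuinely a bijection (both injectivity and surjectivity). Lemma~\ref{pq} and the Fine–Wilf periodicity lemma are the right tools here; alternatively one can cite Theorem~\ref{rec} with $P = \{n\}$, where $m = n$, $\lceil m/2\rceil = \lceil n/2\rceil$, and the sum over $p$ from $\lceil n/2\rceil$ to $n-1$ collapses — but since the statement promises an elementary proof, I would carry out the deletion argument explicitly.
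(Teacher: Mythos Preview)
Your overall strategy (middle insertion/deletion) is exactly the paper's, but the execution has the key implication reversed, and this is a genuine gap. The claim in your forward direction, that inserting a letter in the middle of an unbordered $w\in\Sigma_\ell^{n-1}$ always yields an unbordered $w'\in\Sigma_\ell^n$, is false for even $n$. The middle-insertion observation from Lemma~\ref{limit} only guarantees that $w$ and $w'$ share borders of length at most $\lfloor(n-1)/2\rfloor$, not $\lfloor n/2\rfloor$; for even $n$ these differ by one, and a border of $w'$ of length exactly $n/2$ need not descend to $w$. Concretely, with $\ell=2$ and $n=4$: $w=001$ is unbordered, but inserting $1$ after the first letter gives $w'=0101$, which has the border $01$. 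Since such examples exist, your inequality $u_n\ge \ell u_{n-1}$ cannot hold for even $n$ (indeed $u_n=\ell u_{n-1}-u_{n/2}<\ell u_{n-1}$).

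The implication that does hold in general is the opposite one: if $w'$ is unbordered then $w$ is unbordered. Indeed, the shortest border of any bordered word of length $n-1$ has length at most $\lfloor(n-1)/2\rfloor$ and hence survives the insertion. Consequently your ``reverse direction'' set (unbordered $w'$ with bordered $w$) is always empty, for both parities, so the correction term cannot come from there. The $u_{n/2}$ correction for even $n$ actually counts the pairs $(w,a)$ with $w$ unbordered but $w'$ bordered; in that case the shortest border of $w'$ is forced to have length exactly $n/2$, so $w'=uu$ with $u$ unbordered of length $n/2$, and conversely every such $uu$ arises this way. The paper sidesteps this asymmetry by comparing even $n$ directly to $n-2$ (writing $xaby$ with $|x|=|y|$ and noting $xaby$ is unbordered iff $xy$ is unbordered and $xa\neq by$), then invoking the odd case to rewrite $\ell^2 u_{n-2}$ as $\ell u_{n-1}$. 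Either route works once the direction is straightened out.
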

\begin{proof}
 For $k=1,2$, the verification is straightforward. Let $x$ and $y$ be nonempty words with $|x|=|y|$ and consider words $xy$, $xay$ and $xaby$ where $a$ and $b$ are letters. 

Since the shortest border of $xay$ has length at most $|x|$, the word $xy$ is unbordered if and only if $xay$ is. This proves $u_n=\ell\cdot u_{n-1}$ if $n$ is odd.

On the other hand, $xaby$ can have the shortest border of length $|x|+1$. Therefore, $xaby$ is unbordered  if and only if 
(i) $xy$ is unbordered and (ii) $xa\neq by$. Since the shortest border is itself unbordered, we obtain $u_n=\ell^2\cdot u_{n-2}-u_{n/2}=\ell\cdot u_{n-1}-u_{n/2}$ if $n$ is even.
\end{proof}
 Theorem \ref{unb} directly yields, for each $n\geq 1$,
\[t(2n+1)=t(2n)=t(2n-1)-t(n)\ell^{-n}\,.\]
Therefore 
\begin{align*}
t(2n)=t(1)+\sum_{i=2}^{2n} (t(i)-t(i-1))=1-\sum_{j=1}^{n}  t(j)\ell^{-j}\,.
\end{align*}
Defining the generating function $L_0(x)=\sum_{n\geq 1}t(n)x^n$, we get
\begin{align}
\lim_{n\to \infty}\lambda_\ell(0,n)=1-L_0\left(\frac 1\ell\right).
\end{align}
The next step is to obtain a functional equation for $L_0(x)$:
\begin{align*}
L_0(x)(1-x)&=t(1)x+\sum_{k\geq 2} (t(k)-t(k-1))x^k=\\
&=t(1)x+\sum_{j\geq 1}(t(2j)-t(2j-1))x^{2j}=\\
&=t(1)x-\sum_{j\geq 1}t(j)\ell^{-j}x^{2j}=x-L_0(x^2/\ell)\,.
\end{align*}
Therefore
\[L_0(x)=\frac x{1-x}-\frac{L_0(x^2/\ell)}{1-x}\,.\]
Successively substituting $x=1/\ell$, $x=1/\ell^3$, $x=1/\ell^7$, \dots, we get
\begin{align*}
L_0\left(\frac 1\ell\right)&=\frac 1{\ell-1}-\left(1+\frac 1{\ell-1}\right)L_0\left(\frac 1{\ell^3}\right),\\
L_0\left(\frac 1{\ell^3}\right)&=\frac 1{\ell^3-1}-\left(1+\frac 1{\ell^3-1}\right)L_0\left(\frac 1{\ell^7}\right),\\
&\vdots\\
L_0\left(\frac 1{\ell^{2^i-1}}\right)&=\frac 1{\ell^{2^i-1}-1}-\left(1+\frac 1{\ell^{2^i-1}-1}\right)L_0\left(\frac 1{\ell^{2^{i+1}-1}}\right).
\end{align*}
Since it is easy to see that
\[\lim_{n\to\infty} \frac1{\ell^{2^n-1}-1}
\prod_{i=1}^{n-1}
	\left(
	1+\frac 1{\ell^{2^i-1}-1}
	\right)
L_0\left(\frac 1{\ell^{2^n-1}}\right)=0\,,\]
we obtain

\begin{align*}
	L_0\left(\frac 1\ell\right)=\sum_{n\geq 1}
	\left(
	\frac{(-1)^{n+1}}{\ell^{2^n-1}-1}
	\prod_{i=1}^{n-1}
		{\left(
		1+\frac 1{\ell^{2^i-1}-1}
		\right)}
	\right).
	\end{align*}

A similar analysis was given previously by
\cite{Blom:1995}, although our analysis is slightly cleaner.

\subsection{Words with longest border of length 1}
  There is also a relatively simple recurrence for $\FH_\ell(\{n-1\},n)$, that is, for words with the longest border of length 1. The particular case $\ell=2$ was previously given by Harborth \cite[p.\ 143, Eq.~(36)]{harborth}.  In this section, we let $v_n$ denote $\FH_\ell(\{n-1\},n)$, and let $s(n)$ denote
  $\lambda_\ell(1,n)$.
\begin{theorem}\label{bor1}
\[
v_n=
\begin{cases}
0, & \text{if $n=1$};\\
\ell & \text{if $n=2$};\\
\ell\cdot v_{n-1}-v_{(n+1)/2}, & \text{if $n\geq 3$ is odd};\\
\ell\cdot v_{n-1}-(\ell-1)v_{n/2}, & \text{if $n\geq 4$ is even}.
\end{cases}
\]
\end{theorem}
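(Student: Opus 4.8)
The plan is to imitate the elementary proof of Theorem~\ref{unb}: for odd $n$ insert one letter in the middle of a word of length $n-1$, for even $n$ insert two letters in the middle of a word of length $n-2$, and ask how the property ``longest border equals $1$'' behaves. Deleting the inserted letter(s) leaves every border of length at most $\ceil{n/2}-1$ unchanged, so --- exactly as for unbordered words --- everything reduces to controlling the one remaining border, of length $\ceil{n/2}$. I would first dispose of $n=1,2$ by inspection and verify $n=3,4$ by hand (for these several of the length thresholds below coincide), then treat larger $n$ uniformly.

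The combinatorial heart is a pair of facts about a word $w$ of length $n$. (i) If $w$ has no border of length in $\{2,\dots,\ceil{n/2}\}$, then $w$ has no border of length $\ge 2$ at all. (ii) If $w$ has a border of length $1$ and no border of length in $\{2,\dots,\ceil{n/2}-1\}$, then the only further border $w$ can possess is one of length $\ceil{n/2}$; equivalently, $w$ either has longest border $1$ or has least period $\floor{n/2}$ (and is, for even $n$, a square). Both follow by a short descent on the shortest border $b\ge 2$: if $b>\ceil{n/2}$ then $p:=n-b<\floor{n/2}$ is a period of $w$, so $w$ also has a border of length $2b-n$; this is $\ge 2$ and $<b$ (contradicting minimality) unless $2b-n=1$, in which case the border of length $1+p\le\ceil{n/2}$ takes its place inside the forbidden range. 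The one place I expect real friction is the borderline subcase of (ii) where $2b-n$ equals $\ceil{n/2}$ exactly: then $p$ turns out to be about $n/4$, small enough that $p$ itself or $1+p$ is a forbidden border. Pinning down this case analysis is the main obstacle.

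Odd $n$. Write $w=u_1cu_2$ with $|u_1|=|u_2|=(n-1)/2$ and $c\in\Sigma_\ell$. Deleting $c$ fixes all borders of length $\le(n-1)/2=\ceil{n/2}-1$, and $u_1u_2$ has even length $n-1$; so by (i), if $w$ has longest border $1$ then so does $u_1u_2$, while conversely ``$u_1u_2$ has longest border $1$'' forces $w$ to have no border of length in $\{2,\dots,(n-1)/2\}$, whence by (ii) the only border of $w$ of length $\ge 2$ that can occur is the one of length $(n+1)/2$ --- present exactly when the length-$(n+1)/2$ prefix and suffix of $w$ agree, i.e.\ when $u_1c=cu_2$. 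Hence
\[
w\text{ has longest border }1\iff\bigl(u_1u_2\text{ has longest border }1\bigr)\ \text{and}\ u_1c\ne cu_2 .
\]
Since $(u_1u_2,c)\mapsto w$ is a bijection $\Sigma_\ell^{n-1}\times\Sigma_\ell\to\Sigma_\ell^{n}$, there are $\ell\,v_{n-1}$ words $w$ with $u_1u_2$ of longest border $1$; from these I would subtract those with $u_1c=cu_2$. For such a $w$, writing $z=u_1c=cu_2$ (so $|z|=(n+1)/2$), the word $w$ is recovered from $z$ by overlapping two copies of $z$ in one letter, and $u_1u_2$ has exactly the same borders as $z$, so $u_1u_2$ has longest border $1$ iff $z$ does. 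The subtracted count is therefore $v_{(n+1)/2}$, giving $v_n=\ell\,v_{n-1}-v_{(n+1)/2}$.

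Even $n$. Write $w=u_1cdu_2$ with $|u_1|=|u_2|=n/2-1$ and $c,d\in\Sigma_\ell$; now $u_1u_2$ has even length $n-2$ with half $n/2-1=\ceil{n/2}-1$, and the same reasoning gives that $w$ has longest border $1$ iff $u_1u_2$ does and $w$ lacks its potential border of length $n/2$, i.e.\ iff $u_1u_2$ has longest border $1$ and $u_1c\ne du_2$; moreover $u_1c=du_2$ says precisely that $w=zz$ is a square with $|z|=n/2$. Since borders of $zz$ of length below $n/2$ coincide with borders of $z$, such a square has $u_1u_2$ of longest border $1$ exactly when $z$ has longest border $1$, so
\[
v_n=\ell^{2}\,v_{n-2}-\#\{\,w=zz:\ |z|=n/2,\ z\text{ has longest border }1\,\}=\ell^{2}\,v_{n-2}-v_{n/2}.
\]
Eliminating $v_{n-2}$ between this identity and the odd-case recurrence $v_{n-1}=\ell\,v_{n-2}-v_{n/2}$ (valid since $n-1\ge 3$) then gives the closed form in the statement; the case $n=4$ --- where a length-$4$ word with equal first and last letters and a border of length $3$ is automatically the square $cccc$, hence already removed by the $v_{n/2}$ term --- is checked directly. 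As an alternative route, the whole theorem also falls out of expanding $\FH_\ell(\{n-1\},n)$ with Theorem~\ref{rec}: each $\HH_\ell$-term there is again $\ell^{j}v_k$ for suitable $j,k$, and the resulting sum matches the same expansion of $\ell\,v_{n-1}$ term by term.
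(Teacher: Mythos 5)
Your approach is essentially the paper's: insert one letter (odd $n$) or two letters (even $n$) in the middle, use the fact that this does not disturb borders of length at most $\ceil{n/2}-1$, control the one remaining possible border of length $\ceil{n/2}$, and in the even case obtain $v_n=\ell^2 v_{n-2}-v_{n/2}$ and then eliminate $v_{n-2}$ against the odd-case recurrence. Your facts (i) and (ii) and the two counting bijections are correct; the written justification of (ii) is slightly off (if $b>\ceil{n/2}$ then $2b-n\ge 2$ always, so the escape clause ``$2b-n=1$'' never occurs; the genuinely delicate configuration, which you do flag, is a border of length $>\ceil{n/2}$ coexisting with the allowed one of length exactly $\ceil{n/2}$, and it is settled, much as you indicate, by passing to the period $d=\gcd\bigl(\floor{n/2},\,n-b\bigr)$ and exhibiting a forbidden border of length $d$ or $d+1$, with the letter-power case $d=1$ and the case $n=4$ handled separately).

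The one substantive problem is your final step for even $n$. Eliminating $v_{n-2}$ from $v_n=\ell^2 v_{n-2}-v_{n/2}$ and $v_{n-1}=\ell v_{n-2}-v_{n/2}$ yields $v_n=\ell v_{n-1}+(\ell-1)v_{n/2}$, with a \emph{plus} sign --- not the minus sign in the statement. You assert without doing the algebra that the elimination ``gives the closed form in the statement''; it does not, and in fact the statement as printed is false: for $\ell=2$ and $n=4$ one has $v_4=6$, whereas $\ell v_3-(\ell-1)v_2=2$. The minus sign in the theorem is a misprint --- the paper's own proof ends with $v_n=\ell^2 v_{n-2}-v_{n/2}=\ell v_{n-1}+(\ell-1)v_{n/2}$, and the subsequent identity $s(2n)-s(2n-1)=(\ell-1)s(n)\ell^{-n}$ likewise requires the plus sign. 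So your derivation actually establishes the correct recurrence; the error is in claiming it matches the stated one. You should carry out the elimination explicitly and report the sign discrepancy rather than asserting agreement.
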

\begin{proof}
 Verify that $v_1=0$ and $v_2=\ell$, and let $x$ and $y$ be nonempty words with $|x|=|y|$. Consider words $cxyc$, $cxayc$ and $cxabyc$ where $a,b,c$ are (not necessarily distinct) letters.

The letter $c$ is the longest border of the word $cxayc$ if and only if (i) $c$ is the longest border of $cxyc$ and (ii) $cxa\neq ayc$. Moreover, (i') $c$ is the shortest border of $cxyc$, and (ii') $cxa=ayc$ $(=cxc)$ if and only if $c$ is the shortest border of $cxc$. This implies $v_n=\ell\cdot v_{n-1}-v_{(n+1)/2}$ for $n\geq 3$ odd. 

Similarly, $c$ is the shortest border of $cxabyc$ if and only if (i) $c$ is the longest border of $cxyc$ and (ii) $cxa\neq byc$.
As above, we have to subtract the number of words $cxc$ with the longest border $c$. It follows that $v_n=\ell^2\cdot v_{n-2}-v_{n/2}=\ell v_{n-1}+(\ell-1)v_{n/2}$ for $n\geq 4$ even. 
\end{proof}
From Theorem~\ref{bor1}, we deduce
\begin{align}
s(2n)-s(2n-2)&=-s(n)\ell^{-n},& n&\geq 2,\label{bord1.1} \\ 
s(2n)-s(2n-1)&=(\ell-1)s(n)\ell^{-n}, & n&\geq 2,\\ 
s(2n+1)-s(2n)&=-s(n+1)\ell^{-n},&  n&\geq 1\,.
\end{align} 
Using \eqref{bord1.1}, we obtain
\begin{align*}
s(2n)=s(2)+\sum_{r=j}^{n} (s(2j)-s(2j-2))=1/\ell-\sum_{j=1}^{n}  s(j)\ell^{-j}\,.
\end{align*}
Defining the generating function $L_1(x)=\sum_{k\geq 1}s(k)x^k$, we then get
\[\lambda_\ell(1)=\frac 1\ell-L_1\left(\frac 1 \ell\right).\]
A functional equation for $L_1$ is obtained as follows:
\begin{align*}
L_1(x)&(1-x)=s(1)x +\sum_{k\geq 2}(s(k)-s(k-1))x^k=\\
					&=\frac 1\ell x^2  + \sum_{i\geq 1}(s(2i+1)-s(2i))x^{2i+1} + \sum_{i\geq 2}(s(2i)-s(2i-1))x^{2i}=\\
					&=\frac 1\ell x^2 - \sum_{i\geq 1}s(i+1)\ell^{-i}x^{2i+1}+	\sum_{i\geq 2}(\ell-1)s(i)\ell^{-i}x^{2i}=\\
					&=\frac 1\ell x^2 - \frac\ell x L_1\left(\frac{x^2}{\ell}\right) + (\ell-1)L_1\left(\frac{x^2}{\ell}\right).
\end{align*}
We have
\[L_1(x)=\frac{x^2}{\ell(1-x)}+\frac{\ell-1-\ell/x}{1-x}L_1\left(\frac{x^2}{\ell}\right),\]
and
\[L_1\left(\frac 1 {\ell^{2^i-1}}\right)=\frac {1}{\ell^{2^i}(\ell^{2^{i}-1}-1)}-\ell^{2^i-1}\frac{\ell^{2^i}-\ell+1}{\ell^{2^i-1}-1}L_1\left(\frac{1}{\ell^{2^{i+1}-1}}\right).\]
From here, we deduce
\[L_1\left(\frac 1 \ell\right)=\sum_{n\geq 1}\frac{(-1)^{n+1}}{\ell^{n+1}}\prod_{i=1}^j\frac{\ell^{2^{i-1}}-\ell+1}{\ell^{2^i-1}-1}\,.\]

We do not know how to obtain similar expressions for other border
lengths.

\section{Particular values}
Theorem \ref{rec}, as well as explicit formulas from the previous section allow fast computer evaluation of $\alpha_\ell(n)$ and $\lambda_\ell(r,n)$ for large $n$, and therefore also evaluation of $\lambda_\ell(r)$ and $\alpha_\ell$ with high precision. We list some rounded values in the following tables.   

\[
    \begin{tabular}{c|c}
      $\ell$ & $\alpha_\ell$\\
      \hline
      $2$ & $1.64116491178296695613$ \\
			$3$ & $0.68587617299708343978$ \\
			$4$ & $0.42195659003603599699$ \\
			$5$ & $0.30201601806282253073$ \\
			$10$& $0.12233344445364555354$ \\
			$50$& $0.02081648979722449000$
    \end{tabular}
\]

\medskip

\[
    \begin{tabular}{c|c}
      $r$ & $\lambda_2(r)$\\
      \hline
      $0$ & $0.26778684021788911238$ \\
			$1$ & $0.30042007151830329926$ \\
			$2$ & $0.19891874779036456415$ \\
			$3$ & $0.11216079483159432642$ \\
			$5$ & $0.03044609816129782975$ \\
			$10$& $0.00097577734413168807$
    \end{tabular}
\]

And some values of $\lambda\ell(r)$ rounded to four decimal digits:
\[
    \begin{tabular}{c|c|c|c|c}
      $\lambda_\ell(r)$ & $\ell=3$ & $\ell=4$ & $\ell=5$ &  $\ell=10$ \\
      \hline
      $r=0$ & $0.55698$ & $0.68775$ & $0.76006$ &  $0.89000$ \\
			$r=1$ & $0.28270$ & $0.23024$ & $0.19034$ &  $0.09890$ \\
			$r=2$ & $0.10547$ & $0.06126$ & $0.03961$ &  $0.00999$ \\
			$r=3$ & $0.03641$ & $0.01555$ & $0.00798$ &  $0.00100$ 
    \end{tabular}
\]

For example, we see that a long binary word chosen randomly has
about $27\%$ chance to be unbordered. 
A bit more probable, at $30\%$, is that such a word 
will have its longest border of length one.
Over a five-letter alphabet, more than three words out of four are unbordered,
on average.

Figure~\ref{fig1} shows the distribution of lengths of the shortest
period for binary words of length $n = 18$.

\begin{figure}[H]
\begin{center}
\includegraphics[width=4.5in]{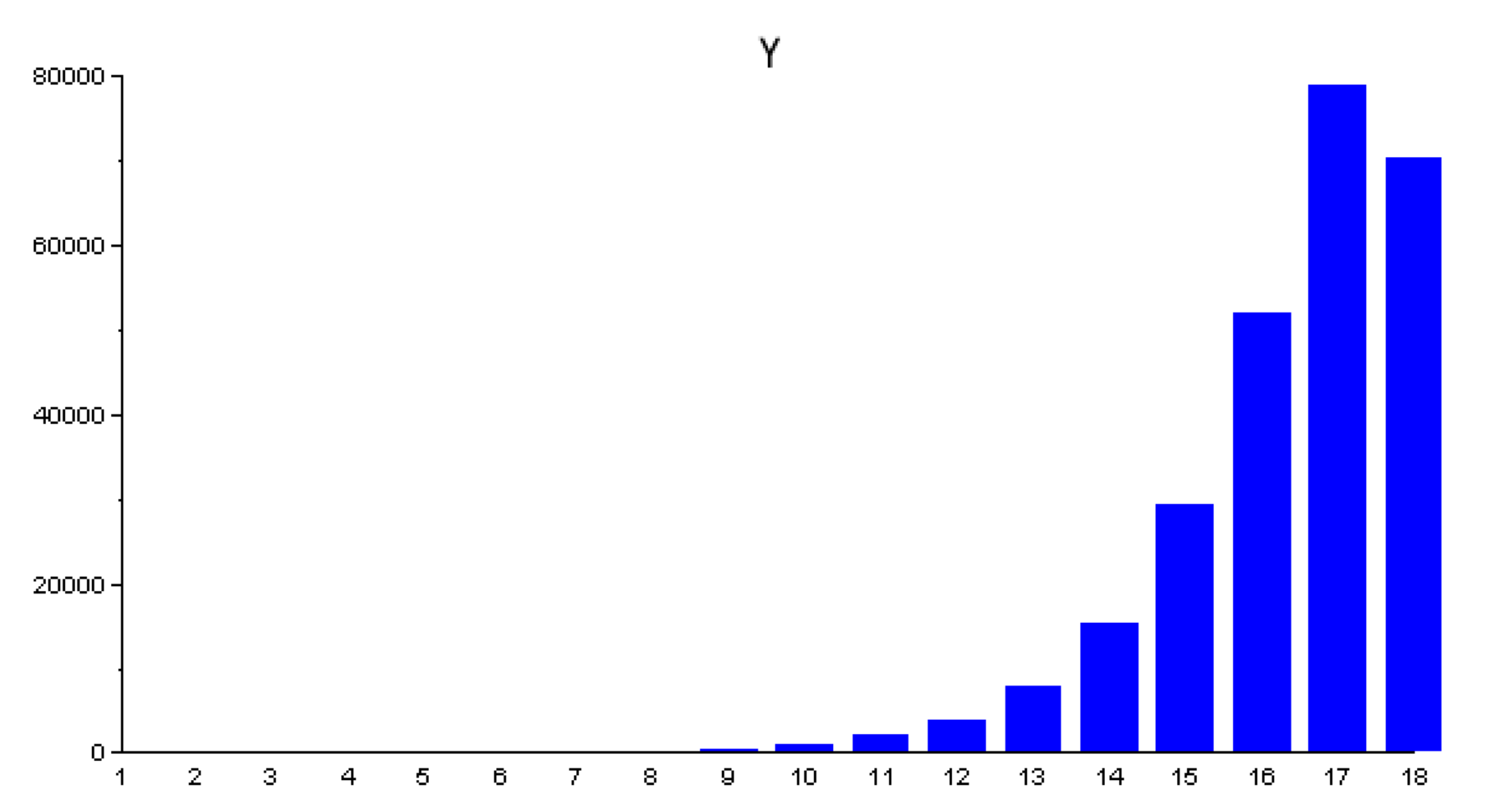}
\caption{Distribution of lengths of shortest period for binary words
of length $18$}
\label{squareorders}
\end{center}
\label{fig1}
\end{figure}

Our original motivation was a question about the average period of a binary word. The answer is, that the border of a binary word has asymptotically constant expected length, namely 
\[\alpha_2\doteq1.64116491178296695612774416940082554065953687825771543\dots\,.\] 

\section{Final remarks}

Recently there has been some interest in computing the expected value of the
largest unbordered factor of a word \cite{Loptev&Kucherov&Starikovskaya:2015}.
This is a related, but seemingly much harder, problem.

\newcommand{\noopsort}[1]{} \newcommand{\singleletter}[1]{#1}


 \end{document}